\documentclass[11pt]{article}
\usepackage{fullpage}

\usepackage{graphics}
\usepackage[dvips]{epsfig}

\usepackage{amsmath}
\usepackage{amssymb}
\usepackage{amsfonts}
\usepackage{graphicx}

\usepackage{algorithm}
\usepackage{algpseudocode}
\usepackage{amsthm}
\usepackage{color}
\usepackage{xspace}

\newtheorem{theorem}{Theorem}[section]

\newtheorem{claim}{Claim}[section]
\newtheorem{proposition}{Proposition}[section]

\newenvironment{proofclaim}{\noindent{\bf Proof of the claim.}}{\hfill$\star$}

\newcommand{\cA}{{\cal A}}

\newcommand{\remove}[1]{}

\newcommand{\cP}{{\cal P}}

\newcommand{\ie}{{\em i.e.,}\xspace}
\newcommand{\eg}{{\em e.g.,}\xspace}



\begin{document}

	\baselineskip  0.21in 
	\parskip     0.1in 
	\parindent   0.0in 

	\title{{\bf On Deterministic Rendezvous at a Node of Agents\\ with Arbitrary Velocities}}

	\author{
		S\'{e}bastien Bouchard\thanks{Sorbonne Universit\'{e}s, UPMC Universit\'{e} Paris 06, CNRS, INRIA, LIP6 UMR 7606, Paris, France, E-mail: sebastien.bouchard@lip6.fr}
		\and
		Yoann Dieudonn\'{e}\thanks{
		MIS Lab., Universit\'{e} de Picardie Jules Verne, France, E-mail: yoann.dieudonne@u-picardie.fr}
		\and
		Andrzej Pelc\thanks{D\'{e}partement d'informatique, Universit\'{e} du Qu\'{e}bec en Outaouais,
		Gatineau, Qu\'{e}bec J8X 3X7,
		Canada. E-mail: pelc@uqo.ca.
		Supported in part by NSERC discovery grant 8136 -- 2013
		and by the Research Chair in Distributed Computing of
		the Universit\'{e} du Qu\'{e}bec en Outaouais.}
		\and
		Franck Petit\thanks{Sorbonne Universit\'{e}s, UPMC Universit\'{e} Paris 06, CNRS, INRIA, LIP6 UMR 7606, Paris, France, E-mail: Franck.Petit@lip6.fr}
	}

	\date{ }

	\maketitle

	\begin{abstract}
		We consider the task of rendezvous in networks modeled as undirected graphs. Two mobile agents with different labels, starting at different nodes of an anonymous graph, have to meet. This task has been considered in the literature under two alternative scenarios: {\em weak} and {\em strong}. Under the weak scenario, agents may meet either at a node or inside an edge. Under the strong scenario, they have to meet at a node, and they do not even notice meetings inside an edge. Rendezvous algorithms under the strong scenario are known for synchronous agents. For asynchronous agents, rendezvous under the strong scenario is impossible even in the two-node graph, and hence only algorithms under the weak scenario were constructed. In this paper we show that rendezvous under the strong scenario is possible for agents with asynchrony restricted in the following way: agents have the same measure of time but the adversary can impose, for each agent and each edge, the speed of traversing this edge by this agent. The speeds may be different for different edges and different agents but all traversals of a given edge by a given agent have to be at the same imposed speed. We construct a deterministic rendezvous algorithm for such agents, working in time polynomial in the size of the graph, in the length of the smaller label, and in the largest edge traversal time.

		\vspace{2ex}

		\noindent {\bf Keywords:} rendezvous, deterministic algorithm, mobile agent, velocity.
	\end{abstract}

	\section{Introduction}

		{\bf The background.}
			We consider the task of rendezvous in networks modeled as undirected graphs. Two mobile entities, called agents,  have different positive integer labels, start from different nodes of the network, and have to meet. Mobile entities may represent software agents in a communication network, or physical mobile robots, if the network is a labyrinth or a cave, or if it consists of corridors of a building.
		The reason to meet may be to exchange previously collected data or ground or air samples, or to split work in a future task of network exploration or maintenance.

			The task of rendezvous in networks has been considered in the literature under two alternative scenarios: {\em weak} and {\em strong}. Under the weak scenario \cite{CLP,DGKKP,DPV}, agents may meet either	at a node or inside an edge. Under the strong scenario \cite{DFKP,KM,TSZ07}, they have to meet at a node, and they do not even notice meetings inside an edge. Each of these scenarios is appropriate in different applications. The weak scenario is suitable for physical robots in a network of corridors, while the strong scenario is needed for software agents in computer networks.

			Rendezvous algorithms under the strong scenario are known for synchronous agents, where time is slotted in rounds, and in each round each agent can either wait at a node or move to an adjacent node. For asynchronous agents, where an agent decides to which neighbor it wants to move but the adversary totally controls the walk of the agent and can arbitrarily vary its speed, rendezvous under the strong scenario is impossible even in the two-node graph, and hence only algorithms under the weak scenario were constructed.

			However, due to the fact that the strong scenario is appropriate for software agents in computer networks, and that such agents are rarely synchronous, it is important to design rendezvous algorithms under the strong scenario, restricting the asynchrony of the agents as little as possible. This is the aim of this paper. We consider mobile agents with asynchrony restricted as follows: agents have the same measure of time but the adversary can impose, for each agent and each edge, the speed of traversing this edge by this agent. The speeds may be different for different edges and different agents but all traversals of a given edge by a given agent have to be at the same imposed speed. We are interested in deterministic rendezvous algorithms for such agents.

		{\bf The model.}\\
			{\em The network.} The network is modeled as a simple undirected connected graph. As in the majority of papers on rendezvous, we seek algorithms that do not rely on the knowledge of node labels, and we assume that the underlying graph is anonymous. Designing such algorithms is important because even when node labels exist, nodes may refuse to reveal them, \eg due to security or privacy reasons. It should be also noted that, if nodes had distinct labels, agents might explore the graph and meet in the smallest node, hence gathering would reduce to graph exploration. On the other hand, we make the assumption, again standard in the literature of the domain, that edges incident to a node $v$ have distinct labels in $\{0,\dots,d-1\}$, where $d$ is the degree of $v$. Thus every undirected edge $\{u,v\}$ has two labels, which are called its {\em port numbers} at $u$ and at $v$. Port numbers are visible to the agents. Port numbering is {\em local}, \ie there is no relation between port numbers at $u$ and at $v$. Note that in the absence of port numbers, edges incident to a node could not be distinguished by agents and thus rendezvous would be often impossible, as the adversary could prevent an agent from traversing some edge incident to the current node. Also, the above mentioned concerns of security and privacy that may prevent nodes from revealing their labels, do not apply to port numbers.

			{\em The agents.} Agents $A_1$ and $A_2$ start at arbitrary different nodes of the graph. They are placed at their starting nodes at the beginning. They cannot mark visited nodes or traversed edges in any way. The adversary wakes up the agents at possibly different times. Agents do not  know the topology of the graph nor any bound on its size. They have clocks ticking at the same rate. The clock of each agent starts at its wakeup, and at this time the agent starts executing the algorithm. When an agent enters a node, it learns its degree and the port of entry. We assume that the memory of the agents is unlimited: from the computational point of view they are modeled as Turing machines.

			{\em The adversary.} The adversary assigns different positive integer labels to both agents. Each agent knows a priori only its own label. Both agents execute the same deterministic algorithm whose parameter is the label of the agent. Moreover, for each edge $e$ of the graph, the adversary assigns two positive reals: $t_1(e)$ and $t_2(e)$. During the execution of an algorithm, an agent can wait at the currently visited node for a time of its choice, or it may choose a port to traverse the corresponding edge $e$. In the latter case, agent $A_r$ traverses this edge in time $t_r(e)$,  getting to the other end of the edge after this time. This modelling permits a lot of asynchrony: agents can have different velocities when traversing different edges, and an agent slower in one edge can be faster in another edge. This is motivated by the fact that congestion and bandwidth of different edges may be different, and that each of the agents can have a different traversing priority level on different edges. In particular, this general scenario includes the model of agents walking at possibly different constant velocities, that was used in \cite{DiPe} for the task of approach in the plane.

			The {\em time} of a rendezvous algorithm is the worst-case time between the wakeup of the earlier agent and the meeting at a node.

			{\bf Our results.} We construct a deterministic rendezvous algorithm working for arbitrary graphs under the strong scenario. Our algorithm works in time polynomial in $n$, $\ell$ and $\tau$, where $n$ is the number of nodes of the graph, $\ell$ is the logarithm (\ie the length) of the smaller label, and $\tau$ is the maximum of all values $t_1(e)$ and $t_2(e)$  assigned by the adversary, over all edges $e$ of the graph.

		{\bf Related work.}
			A survey of  randomized rendezvous in various scenarios  can be found in \cite{alpern02b}. Deterministic rendezvous in networks was surveyed in \cite{Pe}. In many papers rendezvous was considered in a geometric setting: an interval of the real line, see, \eg \cite{baston01,gal99}, or the plane, see, \eg \cite{anderson98a,anderson98b}).

			For deterministic rendezvous in networks, attention concentrated on the study of the feasibility of rendezvous, and on the time required to achieve this task, when feasible. For example, deterministic rendezvous with agents equipped with tokens used to mark nodes was considered, \eg in~\cite{KKSS}. Deterministic rendezvous of two agents that cannot mark nodes but have unique labels was discussed in \cite{DFKP,KM,TSZ07}. All these papers were concerned with the time of rendezvous in arbitrary graphs. In \cite{DFKP} the authors showed a rendezvous algorithm polynomial in the size of the graph, in the length of the shorter label and in the delay between the starting times of the agents. In \cite{KM,TSZ07} rendezvous time was polynomial in the first two of these parameters and independent of the delay. All the above papers assumed that agents are synchronous, and used the scenario called {\em strong} in the present paper.

			Several authors investigated asynchronous rendezvous in the plane \cite{CFPS,fpsw} and in network environments \cite{CLP,DGKKP,DPV}. In the latter scenario, the agent chooses the edge which it decides to traverse but the adversary totally controls the walk of the agent inside the edge and can arbitrarily vary its speed. Under this assumption, rendezvous under the strong scenario cannot be guaranteed even in very simple graphs, and hence the rendezvous requirement was weakened by considering the scenario called {\em weak} in the present paper. In particular, the main result of \cite{DPV} is an asynchronous rendezvous algorithm working in an arbitrary graph at cost (measured by the number of edge traversals) polynomial in the size of the graph and in the logarithm of the smaller label. The scenario of possibly different fixed speeds of the agents was introduced in \cite{DiPe}.

	\section{Preliminaries}\label{prelim}

		Throughout the paper, the number of nodes of a graph is called its size. Procedure ${\tt Explo}$, based on universal exploration sequences (UXS), is a corollary of the result of Reingold \cite{Re}. Given any positive integer $n$, it allows the agent to visit all nodes of any graph of size at most $n$, starting from any node of this graph and coming back to it, using $P(n)$ edge traversals, where $P$ is some {increasing} polynomial. In the first half of the procedure, after entering a node of degree $d$ by some port $p$, the agent can compute the port $q$ by which it has to exit; more precisely $q=(p+x_i)\mod d$, where $x_i$ is the corresponding term of the UXS. In the second half of the procedure, the agent backtracks to its starting node.

		We will use the following terminology. The agent woken up earlier by the adversary is called the {\em earlier} agent and the other agent is called the {\em later} agent. If agents are woken up simultaneously, these appellations are given arbitrarily. Consider executions $E_1$ and $E_2$, respectively of procedures $\cP _1$ and  $\cP _2$ by agents $A_1$ and $A_2$. Executions $E_1$ and $E_2$ are called {\em overlapping}, if the time segments that they occupy are not disjoint.

		We define the following transformation of labels. Consider a label $x$ of an agent, with binary representation $(c_1\dots c_r)$. Define the {\em modified label} of the agent to be the sequence $M(x)=(c_1c_1c_2c_2\dots c_rc_r01)$. Note that, for any $x$ and $y$, the sequence $M(x)$ is never a prefix of $M(y)$. Also, $M(x) \neq M(y)$ for $x\neq y$.

	\section{The algorithm and its analysis}

		The strong rendezvous procedure is called ${\tt Strong\-RV}$ (shown in Algorithm~\ref{alg:ipl:rv}) and its execution requires to call procedure ${\tt Phase}(h)$ that is described in Algorithm~\ref{alg:ipl:phase}. At a high level, ${\tt Phase}(h)$ consists of executions of ${\tt Explo}(h)$ and carefully scheduled waiting periods of various lengths, designed according to the bits of the modified label of the agent. The aim is to guarantee a period in which one agent stays still at a node and the other visits all nodes of the graph.

		\begin{algorithm}[H]
			\caption{Algorithm~${\tt Strong\-RV}$} \label{alg:ipl:rv}
			\begin{algorithmic}[1]
				\State $h\gets 1$
				\While{agents have not met}
					\State execute ${\tt Phase}(h)$
					\State $h\gets 2h$
				\EndWhile
				\State declare that the gathering is achieved
			\end{algorithmic}
		\end{algorithm}

		\begin{algorithm}[H]
			\caption{${\tt Phase}(h)$} \label{alg:ipl:phase}
			\begin{algorithmic}[1]
				\State Let $x$ be the label of the agent and let $M(x)=(b_1b_2\ldots b_s)$, where $s$ is the length of $M(x)$.
				\State /* Initialization */
				\State execute ${\tt Explo}(h)$ \label{line:ipl:init:1}
				\State wait for time $4h^2 (\sum_{j=0}^{\log(h)} (P(2^j)))$ \label{line:ipl:init:2}
				\State /* Core */
				\State $i \gets 1$ \label{line:ipl:core:1}
				\While {$i \leq h$}
					\If {$i>s$ or $b_i=0$}
						\State wait for time $2hP(h)$
						\State execute twice ${\tt Explo}(h)$
					\Else
						\State execute twice ${\tt Explo}(h)$
						\State wait for time $2hP(h)$
					\EndIf
					\State $i \gets i + 1$
				\EndWhile \label{line:ipl:core:2}
				\State /* End */
				\State Wait for time $hP(2h)$ \label{line:ipl:end:1}
				\State Execute ${\tt Explo}(h)$
			\end{algorithmic}
		\end{algorithm}

		The correctness and time complexity of Algorithm~${\tt Strong\-RV}$ are now analyzed. In the following statements and proofs, $\alpha$ denotes the smallest power of two which upper bounds the following three numbers: the size $n\geq 2$ of the graph $G$, the length of the smaller modified label $2\ell+2$, and the parameter $\tau$ the maximum of all traversal durations assigned by the adversary over all edges of $G$.

		The following proposition directly follows from the definitions of $\alpha$ and UXS.

		\begin{proposition}
			\label{proof:ipl:wait:1}
			For any positive integers $x$ and $y$ such that $x \geq \alpha$, $xP(y)$ upper bounds the time required by any agent to execute ${\tt Explo}(y)$ in $G$.
		\end{proposition}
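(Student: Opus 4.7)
The plan is quite short because the proposition essentially unpacks the two defining facts that have just been stated. First I would recall from the description of Procedure \texttt{Explo} (the corollary of Reingold's result quoted in Section~\ref{prelim}) that any execution of ${\tt Explo}(y)$ consists of exactly $P(y)$ edge traversals: the first half computes exit ports from the UXS, the second half backtracks, and both halves together use $P(y)$ traversals.

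Next I would invoke the definition of $\tau$: for every edge $e$ and every agent $A_r$, the adversary-imposed duration $t_r(e)$ satisfies $t_r(e) \leq \tau$. Hence whenever an agent traverses an edge of $G$, the time elapsed is at most $\tau$. Summing over the $P(y)$ traversals that make up ${\tt Explo}(y)$, the total time spent by any agent in executing ${\tt Explo}(y)$ is at most $\tau \cdot P(y)$.

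Finally I would use the hypothesis $x \geq \alpha$ together with the definition of $\alpha$, which guarantees $\alpha \geq \tau$. Therefore $x \geq \tau$, and multiplying by the positive quantity $P(y)$ yields
\[
xP(y) \;\geq\; \tau P(y) \;\geq\; \text{time for any agent to execute } {\tt Explo}(y),
\]
which is the claim. There is no real obstacle: the statement is merely a bookkeeping observation that bundles the bound on the number of traversals (from Reingold's theorem) with the per-edge bound $\tau$ and the choice of $\alpha$ as an upper bound on $\tau$. It is stated explicitly here so that subsequent estimates of the waiting times in ${\tt Phase}(h)$ can quote it without re-deriving the bound each time.
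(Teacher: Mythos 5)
Your proof is correct and is exactly the argument the paper has in mind; the paper itself gives no proof, stating only that the proposition ``directly follows from the definitions of $\alpha$ and UXS,'' and your three steps (at most $P(y)$ edge traversals, each taking time at most $\tau$, and $x \geq \alpha \geq \tau$) are precisely that unpacking.
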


		\begin{proposition}
			\label{proof:ipl:wait:2}
			For any positive integer $x$ and any power of two $y$ such that $x \geq \alpha$ and $x \geq y$, $T_{x, y} = 4xy \sum_{z=0}^{\log(y)} P(2^z)$ upper bounds the time required by any agent to execute the sequence $S_y = {\tt Phase}(1), {\tt Phase}(2), \dots, {\tt Phase}(\frac{y}{4}), {\tt Phase}(\frac{y}{2}), {\tt Explo}(y)$ in graph $G$.
		\end{proposition}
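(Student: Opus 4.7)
The plan is to read the time of each ${\tt Phase}(h)$ directly off Algorithm~\ref{alg:ipl:phase}, sum over $h\in\{1,2,\ldots,y/2\}$, and add the time of the trailing ${\tt Explo}(y)$, then compare the result to $T_{x,y}=4xy\sum_{z=0}^{\log y}P(2^z)$. For brevity, write $\Sigma_m:=\sum_{z=0}^{\log m}P(2^z)$ for any power of two $m$.

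First I would count that ${\tt Phase}(h)$ contains exactly $2h+2$ calls to ${\tt Explo}(h)$ (one on line~\ref{line:ipl:init:1}, two per core iteration across $h$ iterations, and one on the last line), each of which costs at most $xP(h)$ time by Proposition~\ref{proof:ipl:wait:1} (applicable because $x\geq\alpha$). The explicit waiting times sum to $4h^2\Sigma_h$ (line~\ref{line:ipl:init:2}), $h\cdot 2hP(h)=2h^2P(h)$ across the core, and $hP(2h)$ (line~\ref{line:ipl:end:1}). Thus
\[
{\tt Phase}(h)\;\leq\;(2h+2)\,xP(h)\;+\;4h^2\Sigma_h\;+\;2h^2P(h)\;+\;hP(2h).
\]

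Next I would sum this bound over $h=2^0,\ldots,2^{\log y-1}$, add $xP(y)$ for the trailing ${\tt Explo}(y)$, and match the coefficient of each $P(2^j)$, $0\leq j\leq\log y$, against the coefficient $4xy$ of $P(2^j)$ on the right-hand side of $T_{x,y}$. Four sources contribute to the coefficient of $P(2^j)$: $(2^{j+1}+2)x$ from the ${\tt Explo}(h)$ calls (when $j\leq\log y-1$); $\tfrac{4}{3}(y^2-4^j)$ from the initialization waits, obtained by swapping summation order to evaluate $4\sum_{k=j}^{\log y-1}4^k$; $2\cdot 4^j$ from the core waits; and $2^{j-1}$ from the end waits (when $j\geq 1$); plus an extra $x$ from the trailing ${\tt Explo}(y)$ in the case $j=\log y$.

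The hard part will be shaving the constant down to exactly~$4$, since a naive bounding produces a constant well above it. To reach $4xy$, I would use $4^j\leq y^2/4$ and $2^j\leq y/2$ (valid for $j\leq\log y-1$) together with the hypotheses $y\leq x$ and $\alpha\geq 2$ (hence $x\geq 2$, since $\alpha$ upper bounds $n\geq 2$) to absorb the lower-order terms into the dominant $(2^{j+1}+2)x$ contribution. The edge cases $y\in\{1,2\}$, where the middle range $1\leq j\leq\log y-1$ is empty, are handled directly.
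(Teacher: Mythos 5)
Your proposal is correct, and the arithmetic does close: for $1\leq j\leq \log y-1$ the coefficient of $P(2^j)$ is at most $(y+2)x+\tfrac{4}{3}xy+\tfrac{1}{2}xy+\tfrac{1}{4}xy=\tfrac{37}{12}xy+2x\leq 4xy$ once $y\geq 4$, and the cases $j=0$, $j=\log y$ and $y\in\{1,2\}$ go through using $y\leq x$ and $x\geq\alpha\geq 2$ exactly as you indicate. However, your route is genuinely different from the paper's. The paper argues by induction on the power of two $y$: it observes that $S_{2z}$ is precisely $S_z$ followed by ${\tt Suffix}(z)$ (namely ${\tt Phase}(z)$ stripped of its leading ${\tt Explo}(z)$, which is exactly the ${\tt Explo}(z)$ that terminates $S_z$) followed by ${\tt Explo}(2z)$, and then exploits the recurrence $T_{x,2z}=2T_{x,z}+8xzP(2z)$. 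The one potentially troublesome term, the initialization wait $4z^2\sum_{j\leq\log z}P(2^j)$ inside ${\tt Suffix}(z)$, is absorbed by simply noting it is at most $T_{x,z}$ (since $z\leq x$), so the paper never has to sum a geometric series or match coefficients; the inductive step reduces to checking $2z(x+z)+2x+z\leq 8xz$. Your direct summation buys transparency — it makes explicit why the constant $4$ in $T_{x,y}$ suffices and where each contribution lands — at the price of heavier bookkeeping (the $\tfrac{4}{3}(y^2-4^j)$ term from interchanging the order of summation, and separate treatment of boundary indices), whereas the paper's induction buys brevity by letting the definition of $T_{x,y}$ do the telescoping for it. Both proofs are valid; if you write yours up, make sure to state explicitly that the per-phase bound uses Proposition~\ref{proof:ipl:wait:1} for every one of the $2h+2$ calls to ${\tt Explo}(h)$, and that the middle-range estimate $2^j\leq y/2$ is only invoked when that range is nonempty.
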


		\begin{proof}
			Let an arbitrary positive integer at least $\alpha$ be assigned to $x$. The proof is made by induction on $y$. Consider the case where $y=1$. In this case, the sequence $S_y$ consists only of ${\tt Explo}(1)$. In view of Proposition~\ref{proof:ipl:wait:1}, $T_{x, 1}$, which is equal to $4xP(1)$, upper bounds the time required by any agent to execute ${\tt Explo}(1)$, which proves the first step of the induction. Now, assume that there exists a power of two $1\leq z\leq x$, such that the statement of the proposition holds for $y=z$. The next paragraph proves that if $2z\leq x$, then the statement of the lemma holds also for $2z$.

			Denote by ${\tt Suffix}(z)$ the sequence of instructions of ${\tt Phase}(z)$ deprived from the first call to ${\tt Explo}(z)$. The sequence $S_{2z}$ is successively made of $S_z$, ${\tt Suffix}(z)$ and ${\tt Explo}(2z)$. By the inductive hypothesis, the time required to execute $S_z$ is upper bounded by $T_{x, z}$. By Proposition~\ref{proof:ipl:wait:1}, the time required to execute ${\tt Explo}(2z)$ is upper bounded by $xP(2z)$. In view of Algorithm~\ref{alg:ipl:phase} and Proposition~\ref{proof:ipl:wait:1}, the time required to execute ${\tt Suffix}(z)$ is upper bounded by $T_{x, z} + 2z(x+z)P(z) + zP(2z) + xP(z)$. Hence, the maximal duration of $S_{2z}$ is upper bounded by $2T_{x, z} + (x+z)P(2z) + P(z)(2z(x+z) + x)$, which is at most $2T_{x, z} + (2z(x+z)+2x+z) P(2z)$ as $P(2z) \geq P(z)$. Moreover, $2T_{x, z} + (2z(x+z)+2x+z) P(2z) \leq 2T_{x, z} + 4x(2z)P(2z) = T_{x, 2z}$. This shows that the statement of the proposition also holds when $y=2z$, which proves the proposition.
		\end{proof}

		The following theorem proves the correctness of Algorithm~${\tt Strong\-RV}$.

		\begin{theorem}
			\label{proof:ipl:theo1}
			Algorithm~${\tt Strong\-RV}$ guarantees rendezvous in $G$ by the time the first of the agents completes the execution of ${\tt Phase}(2\alpha)$.
		\end{theorem}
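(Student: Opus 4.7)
My plan is to exhibit a time window during Phase($2\alpha$) in which one agent remains stationary at a single node for the entire duration of one call to Explo($2\alpha$) executed by the other; since $2\alpha\ge n$, such an Explo visits every node of $G$, and in particular meets the stationary agent, producing rendezvous.

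Set $h:=2\alpha$. From the definition of $\alpha$ one gets $h\ge 2n$, $h\ge 2\tau$, and $h$ strictly exceeds the length $2\ell+2$ of the shorter modified label. Proposition~\ref{proof:ipl:wait:1} (with $x=y=h$) then yields that every call to Explo($h$) takes time at most $hP(h)$, and Proposition~\ref{proof:ipl:wait:2} (with $x=y=h$) yields that $T_{h,h}=4h^2\sum_{z=0}^{\log h}P(2^z)$, the exact duration of the initialization wait of Phase($h$), upper-bounds the time required by any agent to execute the prefix Phase(1),\dots,Phase($h/2$),Explo($h$). I use this first for a synchronization step: by the time the first-to-finish-initialization agent ends its initialization wait, the other agent has completed all of its Phases $1,\dots,h/2$ plus its own initialization Explo($h$) and is then sitting at a fixed node, so both agents enter the core of Phase($h$) in overlapping time windows.

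Next, since the two distinct labels yield modified labels that are distinct with neither a prefix of the other (as spelled out at the end of Section~\ref{prelim}), the two modified labels disagree at some index $i^\star\le 2\ell+2\le h$. At iteration $i^\star$ of the core, one agent (call it the \emph{waiter}) executes ``wait $2hP(h)$, then twice Explo($h$)'' while the other (the \emph{explorer}) executes ``twice Explo($h$), then wait''. The waiter stays at one node throughout an interval of length exactly $2hP(h)$, whereas each Explo($h$) of the explorer has length at most $hP(h)$, and the two consecutive Explos have total length at most $2hP(h)$. Provided the absolute-time offset $\Delta$ between the two agents at this iteration satisfies $|\Delta|\le hP(h)$, an elementary interval-containment analysis shows that at least one of the explorer's two Explo($h$) calls sits entirely within the waiter's wait; during that call, since $h\ge n$, the explorer visits every node of $G$, meeting the waiter and producing rendezvous strictly before either agent finishes Phase($h$).

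The principal technical obstacle is the bound $|\Delta|\le hP(h)$ at iteration $i^\star$. This offset decomposes into (i) a starting offset between the two cores, controlled by the $T_{h,h}$ cushion built into the initializations, and (ii) a linear drift accumulated over at most $h$ iterations, of per-iteration magnitude $2|e_1-e_2|$ where $e_k\le hP(h)$ is the fixed Explo($h$) duration of agent $k$. The case where the drift is large requires a complementary argument: when one agent's Explo time is significantly larger than the other's, the faster agent completes an entire Explo($h$) strictly inside the slower agent's wait at some iteration (even one where the bits of both agents agree), again yielding rendezvous. Once this case analysis is settled, a routine check of the phase end (the wait $hP(2h)$ followed by Explo($h$)) verifies that no agent finishes Phase($2\alpha$) before the rendezvous identified above occurs.
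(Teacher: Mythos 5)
Your high-level plan is the same as the paper's: use the long initialization wait (whose length is exactly the Proposition~\ref{proof:ipl:wait:2} bound on the prefix of the execution) for coarse synchronization, then exploit the first index at which the two modified labels differ to obtain a window where one agent waits while the other runs a full ${\tt Explo}$. But the proof has a genuine gap exactly where you flag ``the principal technical obstacle'': you never actually control the offset $\Delta$, and both halves of your case analysis are asserted rather than proved. Worse, the ``elementary interval-containment analysis'' is false as stated. Suppose the explorer begins its block $i^\star$ a time $|\Delta|=hP(h)$ \emph{before} the waiter does, and the adversary makes the explorer fast, so each of its ${\tt Explo}(h)$ calls takes $e\ll hP(h)$. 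Then both of its calls end at time $s_W-hP(h)+2e<s_W$, i.e.\ before the waiter's wait even begins, and no containment occurs; the symmetric containment (the waiter's ${\tt Explo}$ inside the explorer's wait) also fails for these parameters. The condition you actually need is of the form $|\Delta|\le e$, where $e$ is the \emph{explorer's own} traversal time, which the adversary can make arbitrarily small --- so the bound $|\Delta|\le hP(h)$ buys you essentially nothing. Likewise, the ``complementary argument'' for large drift (``the faster agent completes an entire ${\tt Explo}(h)$ strictly inside the slower agent's wait at some iteration where the bits agree'') requires the drift, which changes by $2|e_1-e_2|$ per block, to land in a window whose width can be comparable to or smaller than that increment; nothing in your sketch shows it cannot jump over every such window.

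The paper closes this gap with a different and more delicate mechanism: it proves by induction that the $4\alpha$ sub-blocks of the core (individual waits and individual ${\tt Explo}$ calls) of the two agents remain pairwise \emph{concurrent}. If concurrency ever fails at index $x$, the analysis shows $I_x$ and $J_{x-1}$ must both be ${\tt Explo}$ calls, that the resulting lag $\theta_1$ satisfies $e_A\le\theta_1\le e_B$, that it persists and grows to some $\theta_2\ge\theta_1$ by the end of the core, and that this lag is then caught either by the final wait of ${\tt Phase}(\alpha)$ (of length $hP(2h)$) or by the initialization of ${\tt Phase}(2\alpha)$ --- in both cases forcing one agent's full ${\tt Explo}$ inside the other's wait. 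This is where the seemingly odd ``End'' block of ${\tt Phase}(h)$ and the doubling to ${\tt Phase}(2\alpha)$ in the theorem statement earn their keep; your sketch dismisses the phase end as ``a routine check,'' but it is a load-bearing part of the argument. To repair your proof you would need to either reproduce this concurrency induction or supply a genuinely worked-out substitute for both the small-offset containment and the large-drift case; as written, the conclusion does not follow. (A smaller omission: your synchronization step also needs the observation that the later agent is awake before the earlier one finishes the first ${\tt Explo}$ of the phase, since otherwise the wake-up delay is unbounded and Proposition~\ref{proof:ipl:wait:2} alone does not align the two cores.)
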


		\begin{proof}
			Assume by contradiction that the statement of the theorem is false. Note that when any agent finishes the first execution of ${\tt Explo}(\alpha)$ of ${\tt Phase}(\alpha)$ (line~\ref{line:ipl:init:1} of Algorithm~\ref{alg:ipl:phase}) it has visited every node of $G$, and thus the other agent has been woken up before the end of this execution, or else the agents would have met.

			The core of ${\tt Phase}(\alpha)$ (lines \ref{line:ipl:core:1}-\ref{line:ipl:core:2} of Algorithm~\ref{alg:ipl:phase}) can be viewed as a sequence of $\alpha$ blocks, where the $x$-th block (for $1\leq x\leq \alpha$) corresponds to processing bit $b_x$ of the modified label if $x\leq s$ and bit $0$ otherwise. Each of these blocks in turn can be viewed as a sequence of $4$ sub-blocks, each of which corresponds either to a waiting period of length $\alpha P(\alpha)$, or to a single execution of ${\tt Explo}(\alpha)$. Let $I_1,I_2,\ldots,I_{4\alpha}$ (resp. $J_1,J_2,\ldots,J_{4\alpha}$) be the sequence of the $4\alpha$ sub-blocks executed by agent $A$ (resp. agent $B$) in the core of ${\tt Phase}(\alpha)$. The proof of this theorem relies on the following claim.

			\begin{claim}
				For every $1\leq y \leq 4\alpha$, $I_y$ and $J_y$ are concurrent.
			\end{claim}

			\begin{proofclaim}
				Assume by contradiction that $y=x$ is the smallest integer for which it does not hold. Without loss of generality, suppose that the first agent to complete its $x$-th sub-block is $A$. If $x=1$, then in view of Proposition~\ref{proof:ipl:wait:2}, when $A$ starts and finishes $I_1$, $A_2$ is executing the first waiting period of ${\tt Phase}(\alpha)$. Since $I_1$ corresponds to ${\tt Explo}(\alpha)$, as the first bit of a modified label is always $1$, a meeting occurs by the end of the execution of $I_1$ because $\alpha\geq n$, which is a contradiction. So, $x>1$ and since $x$ is minimal, $I_{x-1}$ and $J_{x-1}$ are concurrent. So, when $A$ starts $I_x$, $B$ is executing $J_{x-1}$ and when $A$ completes $I_x$, the execution of $J_{x-1}$ has not yet been completed. This implies that the time required by $A$ to execute $I_x$ is shorter than the time required by $B$ to execute $J_{x-1}$. Hence, $I_x$ cannot be a sub-block corresponding to a waiting period, as each of these periods has length $\alpha P(\alpha)$, which upper bounds the duration of every sub-block corresponding to ${\tt Explo}(\alpha)$ (in view of Proposition~\ref{proof:ipl:wait:1}). Thus $I_x$ corresponds to an execution of ${\tt Explo}(\alpha)$, and so does $J_{x-1}$, as otherwise rendezvous would occur by the time $I_x$ is completed, which would be a contradiction.

				Consider the time lag between executions of $I_x$ and $J_x$. Let $\theta_1=t_B-t_A$, where $t_A$ (resp. $t_B$) is the time when $A$ (resp. $B$) starts $I_x$ (resp. $J_x$). The time required by $A$ (resp. $B$) to execute ${\tt Explo}(\alpha)$ never changes because it is always executed from the initial node of $A$ (resp. $B$), and is at most $\theta_1$ (resp. at least $\theta_1$). Moreover, in each block there are always four sub-blocks: either two waiting periods of $\alpha P(\alpha)$ followed by two ${\tt Explo}(\alpha)$, or vice versa. Consider each of the four positions that can be occupied by $I_x$ in its corresponding block. If it is the first, second, or fourth sub-block of its block, then since $I_x$ and $J_{x-1}$ correspond to ${\tt Explo}(\alpha)$, the number of whole sub-blocks corresponding to ${\tt Explo}(\alpha)$ (resp. the waiting period of $\alpha P(\alpha)$) that remain to be executed by $A$ from $t_A$ is the same as the number of those that remain to be executed by $B$ from $t_A$. If $I_x$ is the third sub-block of its block, then $J_x$ and $J_{x+1}$ correspond to the waiting period while $I_x$ and $I_{x+1}$ correspond to the execution of ${\tt Explo}(\alpha)$. In this case, the number of whole blocks that remain to be executed by $A$ from $t_A$ is the same as the number of those that remain to be executed by $B$ from $t_A$. Moreover, in view of Proposition~\ref{proof:ipl:wait:1}, the time needed by $B$ to execute $J_x$ and $J_{x+1}$ is at least the time needed by $A$ to execute $I_x$ and $I_{x+1}$.

				As a result, whichever the position of $I_x$ in its block, $A$ is the first agent to finish the core of ${\tt Phase}(\alpha)$ and there exists a difference of $\theta_2\geq\theta_1$ between the times when $A$ and $B$ complete this core. To conclude the proof of the claim, two cases are considered: either $\theta_2$ is longer than the time $A$ needs to execute ${\tt Explo}(2\alpha)$, or not.

				In the first case, since $A$ executes ${\tt Explo}(\alpha)$ faster than $B$, it necessarily completes the end of ${\tt Phase}(\alpha)$ at least time $\theta_2$ ahead of $B$. As a consequence, it starts executing ${\tt Phase}(2\alpha)$, and in particular its first instruction ${\tt Explo}(2\alpha)$, at least time $\theta_2$ ahead of $B$. This implies that $A$ completes the execution of the first instruction ${\tt Explo}(2\alpha)$ of ${\tt Phase}(2\alpha)$ before $B$ starts it. Hence, $A$ starts the execution of the first waiting period of ${\tt Phase}(2\alpha)$ by the time $B$ starts the first execution of ${\tt Explo}(2\alpha)$ in ${\tt Phase}(2\alpha)$. In view of Proposition~\ref{proof:ipl:wait:2}, this leads to a meeting before any agent starts the core of ${\tt Phase}(2\alpha)$, which is a contradiction. In the second case, $A$ completes the last waiting period of ${\tt Phase}(\alpha)$  at a time $\theta_2$ ahead of $B$. Moreover, $\theta_2$ is at most the duration of this waiting period and at least the time required by $A$ to execute ${\tt Explo}(\alpha)$. Hence, while $A$ executes entirely the last instruction ${\tt Explo}(\alpha)$ of ${\tt Phase}(\alpha)$, $B$ is waiting (it executes the last waiting period of ${\tt Phase}(\alpha)$). This leads to a meeting before any agent starts ${\tt Phase}(2\alpha)$, and hence the second case also results in a contradiction, which proves the claim.
			\end{proofclaim}

			Note that the length of the smaller modified label is at most $\alpha$. Moreover, the modified labels are not prefixes of each other, hence they must differ at some bit. Thus, it follows from the claim that there exists a period during which an agent is waiting in ${\tt Phase}(\alpha)$ while the other entirely executes ${\tt Explo}(\alpha)$. Hence a meeting occurs before any agent starts ${\tt Phase}(2\alpha)$, which is a contradiction and proves the theorem.
		\end{proof}

		According to Theorem~\ref{proof:ipl:theo1}, rendezvous occurs by the time the first of the two agents completes ${\tt Phase}(2\alpha)$, which occurs, by Proposition~\ref{proof:ipl:wait:2}, before this agent has spent at most a time $T_{4\alpha, 4\alpha}$ since its wake up. However, in view of the definition of $\alpha$ and of the modified labels, $T_{4\alpha, 4\alpha}$ is polynomial in $\alpha$ and, thus, in $n$, $\ell$ and $\tau$. This proves the following theorem.

		\begin{theorem}
			\label{proof:ipl:theo2}
			The execution time of Algorithm~${\tt Strong\-RV}$ is polynomial in $n$, $\ell$ and $\tau$.
		\end{theorem}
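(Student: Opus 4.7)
The plan is to simply chain together the two previous results, Theorem~\ref{proof:ipl:theo1} and Proposition~\ref{proof:ipl:wait:2}, and then verify that the resulting closed-form expression is indeed polynomial in the parameters $n$, $\ell$ and $\tau$. The statement is almost a bookkeeping consequence of what has already been proved, so the proof will be short.

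First I would invoke Theorem~\ref{proof:ipl:theo1} to assert that rendezvous is guaranteed no later than the moment when the earliest agent finishes executing ${\tt Phase}(2\alpha)$. Since Algorithm~${\tt Strong\-RV}$ runs phases with doubling parameter starting from $h=1$, the total execution of the agent up to and including ${\tt Phase}(2\alpha)$ is exactly a prefix of the sequence $S_{4\alpha} = {\tt Phase}(1),{\tt Phase}(2),\dots,{\tt Phase}(\alpha),{\tt Phase}(2\alpha),{\tt Explo}(4\alpha)$. Taking $x=y=4\alpha$ in Proposition~\ref{proof:ipl:wait:2} (whose hypotheses $x\geq\alpha$ and $x\geq y$ are clearly met) yields that the duration of this prefix is bounded by
\[
T_{4\alpha,4\alpha} \;=\; 4(4\alpha)(4\alpha)\sum_{z=0}^{\log(4\alpha)} P(2^z).
\]

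Next I would show that $T_{4\alpha,4\alpha}$ is polynomial in $\alpha$. Since $P$ is a fixed increasing polynomial and the summation has only $\log(4\alpha)+1$ terms, each at most $P(4\alpha)$, we get $T_{4\alpha,4\alpha} \leq 64\alpha^2(\log(4\alpha)+1)P(4\alpha)$, which is polynomial in $\alpha$. Finally, by its very definition $\alpha$ is the smallest power of two that is at least each of $n$, $2\ell+2$ and $\tau$, so $\alpha \leq 2\max(n,2\ell+2,\tau)$ is linear in $\max(n,\ell,\tau)$. Composing the two polynomial bounds, $T_{4\alpha,4\alpha}$ is polynomial in $n$, $\ell$ and $\tau$, which concludes the argument.

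Because every intermediate inequality is already established, there is no real obstacle here; the only mild care is to make sure the choice $y=4\alpha$ in Proposition~\ref{proof:ipl:wait:2} covers exactly the prefix of executed instructions up to the end of ${\tt Phase}(2\alpha)$, and to note that the waiting times programmed inside the phases are expressed in terms of $h$ and $P$ only, so no hidden dependence on $\tau$ appears beyond the one already absorbed in $\alpha$ via Proposition~\ref{proof:ipl:wait:1}.
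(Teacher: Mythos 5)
Your proposal is correct and follows essentially the same route as the paper: invoke Theorem~\ref{proof:ipl:theo1}, bound the time up to the end of ${\tt Phase}(2\alpha)$ by $T_{4\alpha,4\alpha}$ via Proposition~\ref{proof:ipl:wait:2}, and observe that this quantity is polynomial in $\alpha$ and hence in $n$, $\ell$ and $\tau$. The only difference is that you spell out the prefix argument and the explicit bound on $T_{4\alpha,4\alpha}$, which the paper leaves implicit.
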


	\section{Discussion of alternative scenarios}

		Our result shows that the time of rendezvous can be polynomial in $n$, $\ell$ and $\tau$, where $n$ is the number of nodes of the graph, $\ell$ is the logarithm of the smaller label, and $\tau$ is the maximum of all values $t_1(e)$, $t_2(e)$ assigned by the adversary, over all edges $e$ of the graph. We assumed that the time is counted since the wakeup of the earlier agent. It is natural to ask, if it is possible to construct a rendezvous algorithm whose time depends on $n$, $\ell$ and $\tau'$, where $\tau'$ is the {\em minimum} of all values $t_1(e)$, $t_2(e)$ assigned by the adversary, over all edges $e$ of the graph. The answer is trivially negative, if time is counted, as we do, since the wakeup of the earlier agent. Indeed, suppose that there exists such an algorithm working in some time $F(n,\ell,\tau')$. The adversary assigns $t_1(e)>F(n,\ell,\tau')$, for the first edge $e$ taken by the first agent, starts the first agent at some time $t_0$ and delays the wakeup of the second agent until time $t_0+t_1(e)$. Rendezvous cannot happen before time $t_0+t_1(e)$, which is a contradiction.

		It turns out that the answer is also negative in the easier scenario, when time is counted since the wakeup of the agent that is woken up later.Consider even a simplified situation, where $t_1=t_1(e)$ is the same for all edges $e$, and $t_2=t_2(e)$ is the same for all edges $e$. In other words, each of the agents has a constant speed. Thus $\tau=\max(t_1,t_2)$ and $\tau'=\min(t_1,t_2)$. Call the agent $A_r$ for which $t_r$ is larger the {\em slower} agent, and the other -- the {\em faster} agent.

		First notice that, if we show that any rendezvous algorithm must take time at least $\tau$ since the wakeup of the later agent, the negative result follows, as the adversary can assign $\tau>F(n,\ell,\tau')$. We now show that, indeed, for any rendezvous algorithm, there exists a behavior of the adversary for which this algorithm takes time at least $\tau$ since the wakeup of the later agent.

		Denote by $\beta$ (resp. by $\gamma$) the waiting time between the wakeup of the faster (resp. slower) agent and the time when it starts the first edge traversal. Let $d=|\beta -\gamma |$.

		If $\beta \geq \gamma$, the adversary wakes up the faster agent at some time $t_0$ and wakes up the slower agent at time $t_0+d$. Both agents start traversing their first edge at the same time $t_0+\beta$ and cannot meet before time $t_0+\beta+\tau$. Since both agents were awake at time $t_0+\beta$, our claim follows.

		If $\beta< \gamma$, the adversary wakes up the slower agent at some time $t_0$ and wakes up the faster agent at time $t_0+d$. Both agents start traversing their first edge at the same time $t_0+\gamma$ and cannot meet before time $t_0+\gamma+\tau$. Since both agents were awake at time $t_0+\gamma$, our claim follows. This concludes the justification that it is impossible to guarantee rendezvous in time depending on $n$, $\ell$ and $\tau'$, even when time is counted since the wakeup of the later agent.

		The above remark holds under the strong scenario considered in this paper. By contrast, the answer to the same question turns out to be positive in the weak scenario. This can be justified as follows. In \cite{DPV} the authors showed a rendezvous algorithm with cost (measured by the total number of edge traversals) polynomial in $n$ and $\ell$ under the weak scenario, assuming that agents are totally asynchronous. Let $\cA$ be this algorithm and let its cost be $K(n,\ell)$, where $K$ is some polynomial. Consider algorithm $\cA$ in the simplified model mentioned above, where each of the agents has a constant speed, the speeds being possibly different, still under the weak scenario. Consider the part of the cost of the algorithm counted since the wakeup of the later agent. This cost is $K'(n,\ell)$, where $K'$ is some polynomial. Of course, the behavior of the agents  in which an agent never waits at a node and crosses each edge at its constant speed is a possible behavior imposed by a totally asynchronous adversary, and hence the result of \cite{DPV} still holds (under the weak scenario). Hence, if time is counted from the wakeup of the later agent, the time of the algorithm from \cite{DPV} is at most $\tau'\cdot K'(n,\ell)$, and therefore it is polynomial in $n$, $\ell$ and $\tau'$.

		Hence, we get the following observation that shows a provable difference between the time of rendezvous under the strong and the weak scenarios, even in the situation when rendezvous is possible under both of these scenarios. If time is counted from the wakeup of the later agent, and agents have constant, possibly different velocities, then rendezvous in time depending on $n$, $\ell$ and $\tau'$ cannot be guaranteed under the strong scenario, but there is a rendezvous algorithm working in time polynomial in $n$, $\ell$ and $\tau'$ under the weak scenario.

		We conclude the paper by stating the following problem. What is the strongest adversary under which strong rendezvous in arbitrary graphs is possible? We showed that this is the case under an adversary that imposes possibly different speeds for different agents and different edges, but the speed must be the same for all traversals of a given edge by a given agent. On the other hand,  it is easy to see that if the adversary can impose a different speed {\em for each traversal} of each edge by each agent (thus the speeds may vary for different traversals of the same edge by the same agent) then strong rendezvous is impossible even in the two-node graph.

		\vspace{-0.2cm}

	\bibliographystyle{plain}
	

\end{document}